\numberwithin{equation}{section}
\providecommand{\bR}{\mathbf{R}}
\providecommand{\x}{\mathbf{x}}
\providecommand{\X}{\mathbf{X}}
\providecommand{\y}{\mathbf{y}}
\renewcommand{\Pr}{\mathbb{P}}
\providecommand{\Ex}{\mathbb{E}}
\providecommand{\abs}[1]{\left\lvert#1\right\rvert}
\providecommand{\al}[2]{\begin{align}\label{#1}#2\end{align}}
\providecommand{\as}[1]{\begin{align*}#1\end{align*}}
\renewcommand{\abstract}[1]{
 \centerline{
 \begin{minipage}{0.7\linewidth}
 \hrule
 \vskip 0.1in
  \begin{center}
    {\bf Abstract}
  \end{center}
  #1
 \vskip 0.1in
 \hrule
 \end{minipage}}
 \vskip 0.3in} 
\newtheorem{thm}{Theorem}
\newtheorem{crly}{Corollary}
\title{Kernel-based aggregation of marker-level genetic association tests involving copy-number variation}
\author{Yinglei Li\\Department of Statistics\\University of Kentucky \and Patrick Breheny\\Department of Biostatistics\\University of Iowa}
\date{\today}
\begin{document}

\maketitle

\abstract{Genetic association tests involving copy-number variants (CNVs) are complicated by the fact that CNVs span multiple markers at which measurements are taken.  The power of an association test at a single marker is typically low, and it is desirable to pool information across the markers spanned by the CNV.  However, CNV boundaries are not known in advance, and the best way to proceed with this pooling is unclear.  In this article, we propose a kernel-based method for aggregation of marker-level tests and explore several aspects of its implementation.  In addition, we explore some of the theoretical aspects of marker-level test aggregation, proposing a permutation-based approach that preserves the family-wise error rate of the testing procedure, while demonstrating that several simpler alternatives fail to do so.  The empirical power of the approach is studied in a number of simulations constructed from real data involving a pharmacogenomic study of gemcitabine, and compares favorably with several competing approaches.}

\section{Introduction}

The classical genetic model states that humans have two copies of each gene, one on each chromosome.  The sequencing of individual human genomes, however, has revealed that there is an unexpected amount of structural variation present in our genetic makeup.  Sections of DNA can be deleted or duplicated, leaving individuals with fewer or more copies of portions of their genome (such individuals are said to have a {\em CNV}, or copy-number variation, at that position).  Understanding the contribution of CNVs to human variation is one of the most compelling current challenges in genetics \citep{McCarroll2008}.

As the coverage of single nucleotide polymorphism (SNP) arrays has increased, it is increasingly possibly to use this data to infer the CNV status of individuals.  Indeed, recent generations of such chips include probes specifically designed to enable measurement of copy number \citep{McCarroll2008a,Perkel2008}.  Although other technologies for determining copy number exist, the use of genotyping chips has a clear advantage in that hundreds of genome-wide association studies have been performed \citep{GWAScatalog}, and these studies may be mined for CNV data at no added cost.  Many of these studies have been carried out with very large sample sizes, thereby enabling CNV studies on a scale that would otherwise be prohibitively expensive \citep{WTCCC2007,Barnes2008}.  These factors have led to a number of studies using data from high-density genotyping arrays to investigate the nature of copy-number variation and its role in human variation and disease \citep{Redon2006,Komura2006,Simon-Sanchez2008,Cooper2011,Konishi2011}.

Two general strategies have been proposed for conducting genetic association studies of copy-number variation.  The majority of analytic techniques attempt to (1) identify or ``call'' CNVs for each individual, then (2) carry out association tests of whether individuals with a CNV differ from individuals without a CNV with regard to disease or some other phenotype.  An alternative approach is to reverse the order of those two steps: (1) carry out association testing at the single marker level, then (2) aggregate information from neighboring markers to determine CNVs associated with disease/phenotype.  The key idea of both approaches is that, because the data is noisy, it is virtually impossible to identify CNV associations from a single marker.  Because copy number variants extend over multiple markers in a sufficiently high-density array, however, we are able to carry out inferences regarding CNVs by pooling information across neighboring markers.

We refer to these two approaches, respectively, as {\em variant-level testing} and {\em marker-level testing}.  In \citet{Breheny2012}, the authors explored the relative strengths and weaknesses of the two approaches and reached the conclusion that marker-level approaches were better able to identify associations involving small, common CNVs, while variant-level approaches were better able to identify associations involving large, rare CNVs.

One serious complication with variant-level testing is that the estimated CNV boundaries from different individuals do not, in general, coincide.  This presents a number of difficulties.  Whether or not two individuals with partially overlapping CNVs should be in the same risk group for the purposes of association testing is ambiguous, and complicates both the association test itself as well as attempts to correct for multiple comparisons.  With a large sample size, the complexity of partially overlapping CNV patterns quickly becomes daunting.  Marker-level testing is an attractive alternative, as the aggregation is carried out on the test results, thereby avoiding the complications of overlapping boundaries.

\begin{figure}[htb!]
 \centering
 \includegraphics[width=0.6\linewidth]{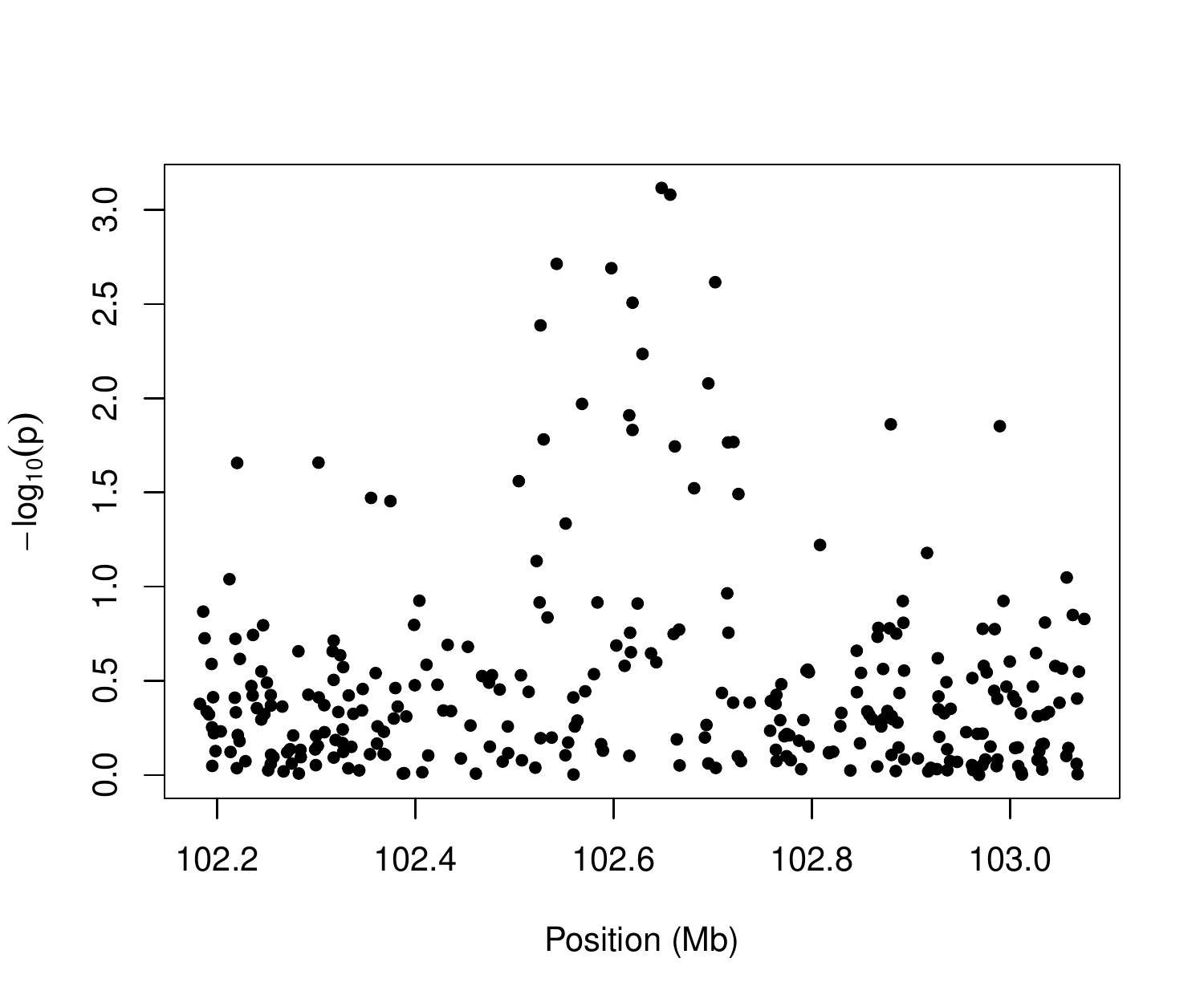}
 \caption{\label{Fig:mlt} Illustration of marker-level testing.  The $-\log_{10}(p)$ values from the marker-level tests are plotted as a function of position along the chromosome.}
\end{figure}

We illustrate the idea behind marker-level testing and aggregation in Figure~\ref{Fig:mlt}, which plots a negative log transformation of the $p$-values of the marker-level tests versus position along the chromosome.  The details of the hypothesis tests for this example are described in Section~\ref{Sec:real-data}, but the $p$-values may arise from any test for association between copy number intensity \citep{Peiffer2006} and phenotype.
%The details of the hypothesis tests for this example are given in Section~\ref{Sec:real-data}, but the $p$-values may arise from any test for association between copy number intensity \citep{Peiffer2006} and phenotype.
The salient feature of the plot is the cluster of small $p$-values between 102.5 and 102.7 Mb.  The presence of so many low $p$-values in close proximity to one another suggests an association between the phenotype and copy-number variation in that region.

To locate these clusters on a genome-wide scale, \citet{Breheny2012} used a marker-level approach based on circular binary segmentation \citep{Olshen2004,Venkatraman2007}.  Here, we take a closer look at the problem of aggregating $p$-values from marker level tests.  We present two main findings.  First, we develop a computationally efficient kernel-based approach for $p$-value aggregation.  Second, we analyze the multiple comparison properties of this approach, and of $p$-value aggregation in general.  In particular, we demonstrate that na\"ive aggregation approaches assuming exchangeability of test statistics do not preserve the family-wise error rate (FWER).  To solve this problem, we present a permutation-based approach and show that it preserves family-wise error rates while maintaining competitive power.

\section{Kernel-based aggregation}
\label{Sec:method}

Throughout, we will use $i$ to index subjects and $j$ to index markers.  Let $X_{ij}$ denote the intensity measurement for subject $i$ at marker $j$, let $y_i$ denote the phenotype for subject $i$, and let $p_j$ denote the $p$-value arising from a test of association between intensity and phenotype at marker $j$.  Finally, let $\ell_j$ denote the location of marker $j$ along the chromosome and $J$ denote the total number of markers.

Consider the aggregation
\al{eq:kernel}{T(\ell_0) = \frac{\sum_j t_j K_h(\ell_j,\ell_0)}{\sum_j K_h(\ell_j,\ell_0)},}
where $t_j = f(p_j)$ is a function of the test results for marker $j$ and $K_h(\ell_j,\ell_0)$ is a kernel that assigns a weight to $p_j$ depending on how far away marker $j$ is along the chromosome from the target location $\ell_0$.  The parameter $h$ defines the bandwidth of the kernel and thereby controls the bias-variance tradeoff --- a larger bandwidth pools test results over a larger region and thereby decreases variance, but potentially introduces bias by mixing test results beyond the boundary of a CNV with those inside the boundary.

Although in principle one could apply \eqref{eq:kernel} at any arbitrary location $\ell_0$, we restrict attention here to locations at which a marker is present and for which the bandwidth does not extent beyond the borders of the chromosome, thereby obtaining a finite set of aggregates $\{T_j\}$.  We will consider transformations $f(p_j)$ such that low $p$-values lead to large values of $t_j$, leading to significance testing based on the statistic $T=\max_j\{T_j\}$.

In this section, we describe the choice of kernel $K_h$ and transformation $f(p_j)$, as well as the issue of incorporating the direction of association for signed tests.

\subsection{Choice of kernel}
\label{Sec:kernel}

We consider two primary choices with regard to the kernel: shape and definition of bandwidth.  First, we may consider varying the shape of the kernel.  Two common choices are the flat (``boxcar'') kernel and the Epanechnikov kernel:
\al{eq:kern-flat}{
  \textrm{Flat}: \quad & K_h(\ell_j,\ell_0) = \begin{cases}
    1 &\qquad \textrm{if } \abs{\ell_j-\ell_0} \leq h\\
    0 &\qquad \textrm{otherwise}\end{cases}\\
  \textrm{Epanechnikov}: \quad & K_h(\ell_j,\ell_0) = \begin{cases} \frac{3}{4}\left\{1-\left(\frac{\ell_j-\ell_0}{h}\right)^2\right\} &\qquad \textrm{if } \abs{\ell_j-\ell_0} \leq h\\
    0 &\qquad \text{otherwise.} \end{cases}\label{eq:kern-ep}}
Intuitively, the Epanechnikov kernel would seem more attractive, as it gives higher weight to markers near the target location, and diminished weight to distant markers where bias is a larger concern.

Besides varying the shape of the kernel, we consider two definitions of bandwidth, which we refer to as {\em constant width} and {\em constant marker} (these concepts are sometimes referred to as ``metric'' and ``adaptive'' bandwidths, respectively, in the kernel smoothing literature).  In the constant width approach, as illustrated in \eqref{eq:kern-flat}-\eqref{eq:kern-ep}, the width $h$ of the kernel is constant.

In contrast, the constant marker approach expands and contracts the range of the kernel as needed so that there are always $k$ markers in the positive support of the kernel.  Specifically, $h_k(\ell_0) = \abs{\ell_0-\ell_{[k]}}$, where $\ell_{[k]}$ is the location of the $k$th closest marker to $x$.  For the constant width approach, the number of markers given positive weight by the kernel varies depending on $\ell_0$.

The general tradeoff between the two approaches is that as we vary the target location, $\ell_0$, constant width kernels suffer from fluctuating variance because the effective sample size is not constant, whereas constant marker kernels suffer from fluctuating bias because the size of the region over which test results are pooled is not constant.  We investigate the benefits and drawbacks of these various kernels in Section~\ref{Sec:sim}.

%We consider primarily a nearest-neighbor version of the flat or ``boxcar'' kernel:
%\al{eq:nn}{K_h(\ell_j,\ell_0) = \begin{cases}
%1 &\qquad \textrm{if } r_j \leq h\\
%0 &\qquad \textrm{otherwise}\end{cases},}
%where $r_j$ is the rank of $\abs{\ell_j-\ell_0}$.  In other words, the kernel assigns a weight of 1 if the marker is one of the $h$ nearest neighbors to $\ell_0$ and 0 otherwise.

As a point of reference, the flat, constant marker kernel is similar to the simple moving average, although not exactly the same.  For example, consider the following illustration.
\begin{center}
\includegraphics[scale=0.75]{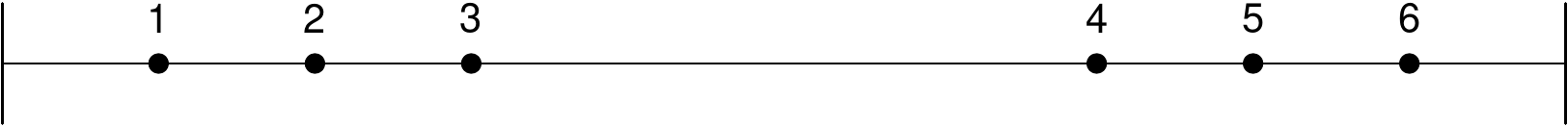}
\end{center}
Suppose $h=3$.  At $\ell_3$, the three nearest neighbors are $\{p_1,p_2,p_3\}$, while at $\ell_4$, the three nearest neighbors are $\{p_4,p_5,p_6\}$.  Thus, combinations such as $\{p_3,p_4,p_5\}$ are not considered by the kernel approach.  This prevents the method from aggregating test results over inappropriately disperse regions of the chromosome, such as across the centromere.

\subsection{Transformations}
\label{Sec:trans}

As suggested by \eqref{eq:kernel}, directly pooling $p$-values is not necessarily optimal.  Various transformations of $p$ may be able to better discriminate true associations from noise.  Specifically, we consider the following transformations:
\al{eq:trans-up}{\textrm{p}: \quad t_j &= 1-p_j\\
\label{eq:trans-uz}\textrm{Z}: \quad t_j &= \Phi^{-1}(1-p_j)\\
\label{eq:trans-ul}\textrm{log}: \quad t_j &= -\log p_j,}
where the text to the left of the equation is the label with which we will refer to these transformations in later figures and tables.  The transformations are constructed in such a way that low $p$-values produce high values of $t_j$ for all three transformations.

All three transformations have a long history in the field of combining $p$-values.  Forming a combination test statistic based on the sum of $\log p$ values (or equivalently, the log of the product of the $p$-values) was first proposed by \citet{Fisher1925} --- the so-called Fisher combination test.  The transformation \eqref{eq:trans-uz} was proposed by \citet{Stouffer1949}, who also studied the properties of sums of these normal-transformed $p$-values.  Finally, \eqref{eq:trans-up} was proposed and studied by \citet{Edgington1972}.  Several authors have followed up on these proposals \citep{Littell1971,Hedges1985,Feller1968,Good1955}.  Throughout this literature, the majority of work has focused on these scales --- uniform, Gaussian, and logarithmic --- each of which has been shown to have advantages and drawbacks.  There is no uniformly most powerful method of combining $p$-values \citep{Zaykin2002}.

The present application differs from the classical work described above in that the borders of the CNVs are not known, and thus neither is the appropriate set of $p$-values to combine.  Consequently, we must calculate many combinations $\{T_j\}$, which are partially overlapping and therefore not independent, thereby requiring further methodological extensions.  The implications of these concerns is addressed in Section~\ref{Sec:fwer}.

\subsection{Direction of association}
\label{Sec:sign}

Some association tests ($z$ tests, $t$ tests) have a direction associated with them, while others ($\chi^2$ tests, $F$ tests) do not.  As we will see in Section~\ref{Sec:sim}, it is advantageous to incorporate this direction into the analysis when it is available, as it diminishes noise and improves detection.  We introduce here extensions of the transformations presented in Section~\ref{Sec:trans} that include the direction of association.

Let $s_j$ denote the direction of association for test $j$.  For example, in a case control study, if intensities were higher for cases than controls at marker $j$, then $s_j=1$.  At markers where CNV intensities were higher for controls than cases, $s_j=-1$.  The signs are arbitrary; their purpose is to reflect the fact that an underlying, latent CNV affects both phenotype and intensity measures; thus, switching directions of association are inconsistent with the biological mechanism being studied and likely to be noise.

When $s_j$ is available, we adapt the three transformations from \ref{Sec:trans} as follows:
\al{eq:trans-sp}{\textrm{p}: \quad t_j &= s_j(1-p_j)\\
\label{eq:trans-sz}\textrm{Z}: \quad t_j &= \Phi^{-1}\left(\frac{1+s_j(1-p_j)}{2}\right)\\
\label{eq:trans-sl}\textrm{log}: \quad t_j &= -s_j\log p_j.}
All of these transformations have the same effect: when $p_j \approx 0$ and $s_j=1$, $t_j \gg 0$; when $p_j \approx 0$ and $s_j=-1$, $t_j \ll 0$; and when $p_j \approx 1$, $t_j \approx 0$ regardless of the value of $s_j$.  In other words, the test results combine to give an aggregate value $T(\ell_0)$ that is large in absolute value only if the test results have low $p$-values and are consistently in the same direction.

\section{Significance testing and FWER control}
\label{Sec:fwer}

\subsection{Exchangeability}
\label{Sec:exch}

In any analysis that involves aggregating marker-level test results, it is of interest to be able to quantify the significance of regions like those depicted in Figure~\ref{Fig:mlt}.  This is not trivial, however, as the lack of exchangeability between test results complicates matters and causes various na\"ive approaches to fail.  In this section, we illustrate the consequences of non-exchangeability by comparing three approaches to establishing the combined significance of a region with a preponderance of low $p$-values.

One approach, suggested in \citet{Breheny2012}, is to use circular binary segmentation (CBS; implemented in the \texttt{R} package \texttt{DNAcopy}).  This method aggregates neighboring $p$-values by calculating the $t$-test statistic comparing the intensity of a given region with that of the surrounding region.  The significance of this test statistic is quantified by comparing it to the distribution of maximum test statistics obtained by permuting the $\{p_j\}$ values \citep{Olshen2004,Venkatraman2007}.  Crucially, however, this approach assumes that the test results $\{p_j\}$ are exchangeable as the justification for permuting them.

Alternatively, we may use the kernel methods described in Section~\ref{Sec:kernel} to aggregate the neighboring test results, thereby obtaining $T_{\max}=\max_j\{T_j\}$.  One approach to approximating the null distribution of $T_{\max}$ is to use Monte Carlo integration based on the fact that, under the null distribution, $p_j \sim \textrm{Uniform}(0,1)$.  % and $s_j$ equals $+1$ or $-1$ with equal probability.  
Thus, for any choice of transformation and kernel in \eqref{eq:kernel}, we may generate an arbitrary number of independent draws $\{T_{\max}^{(b)}\}_{b=1}^B$ from the null distribution function $F_0$ of $T_{\max}$, and use the empirical CDF of those draws to obtain the estimate $\hat{F}_0$.  Thus, we obtain a test for the presence of a CNV-phenotype association based on $p=1-\hat{F}_0(T_{\max})$.  The crucial assumption here is that, under the null, the $p$-values are independent.

An alternative to the Monte Carlo approach for quantifying the significance of $T_{\max}$, described fully in Section~\ref{Sec:perm}, involves obtaining $\hat{F}_0$ by permuting the phenotype prior to aggregation of the marker-level tests.  %In what follows, we will refer to these three approaches as CBS (circular binary segmentation), KMC (kernel Monte Carlo), and KP (kernel permutation).

\begin{table}[htb!]
\begin{center}
\caption{\label{Tab:fwer} Preservation of Type I error for three methods with nominal $\alpha=.05$ in two possible settings for which the null hypothesis holds.  The simulated genomic region contained 200 markers, 30 of which were spanned by a CNV.  The CNV was present in either 0\% or 50\% of the samples, depending on the setting.  A detailed description of the simulated data is given in Section~\ref{Sec:sim}.}
\begin{tabular}{@{}lp{2cm}p{2cm}p{2cm}@{}}
\toprule
 & Circular & Kernel & Kernel \\
 & binary   & Monte  & Permutation\\
 & segmentation & Carlo\\
\midrule
No CNV         & 0.05 & 0.06 & 0.06 \\ 
No Association & 0.20 & 0.54 & 0.06 \\ 
\bottomrule
\end{tabular}
\end{center}
\end{table}

Consider a genomic region in which individuals may have a CNV.  The goal of the analysis is to detect CNVs associated with a particular phenotype.  Thus, the null hypothesis may hold in one of two ways: (1, ``No CNV'') no individuals with CNVs are present in the sample, or (2, ``No association'') individuals with CNVs are present in the sample, but the CNV does not change the probability of developing the phenotype.  Table~\ref{Tab:fwer} demonstrates that while all three methods have the proper type I error rate under null hypothesis 1 (No CNV), only the permutation approach preserves the correct type I error in the case where a CNV is present, but not associated with the disease.  This is due to the fact that when a CNV is present, although it is still true that the marginal distribution of each $p_j$ is Uniform(0,1), the CNV introduces correlation between nearby markers, thereby violating the assumptions of exchangeability and independence made by the CBS and Monte Carlo approaches.  This phenomenon is also illustrated graphically in Figure~\ref{Fig:fwer}.

\begin{figure}[htb!]
 \centering
 \includegraphics[width=0.75\linewidth]{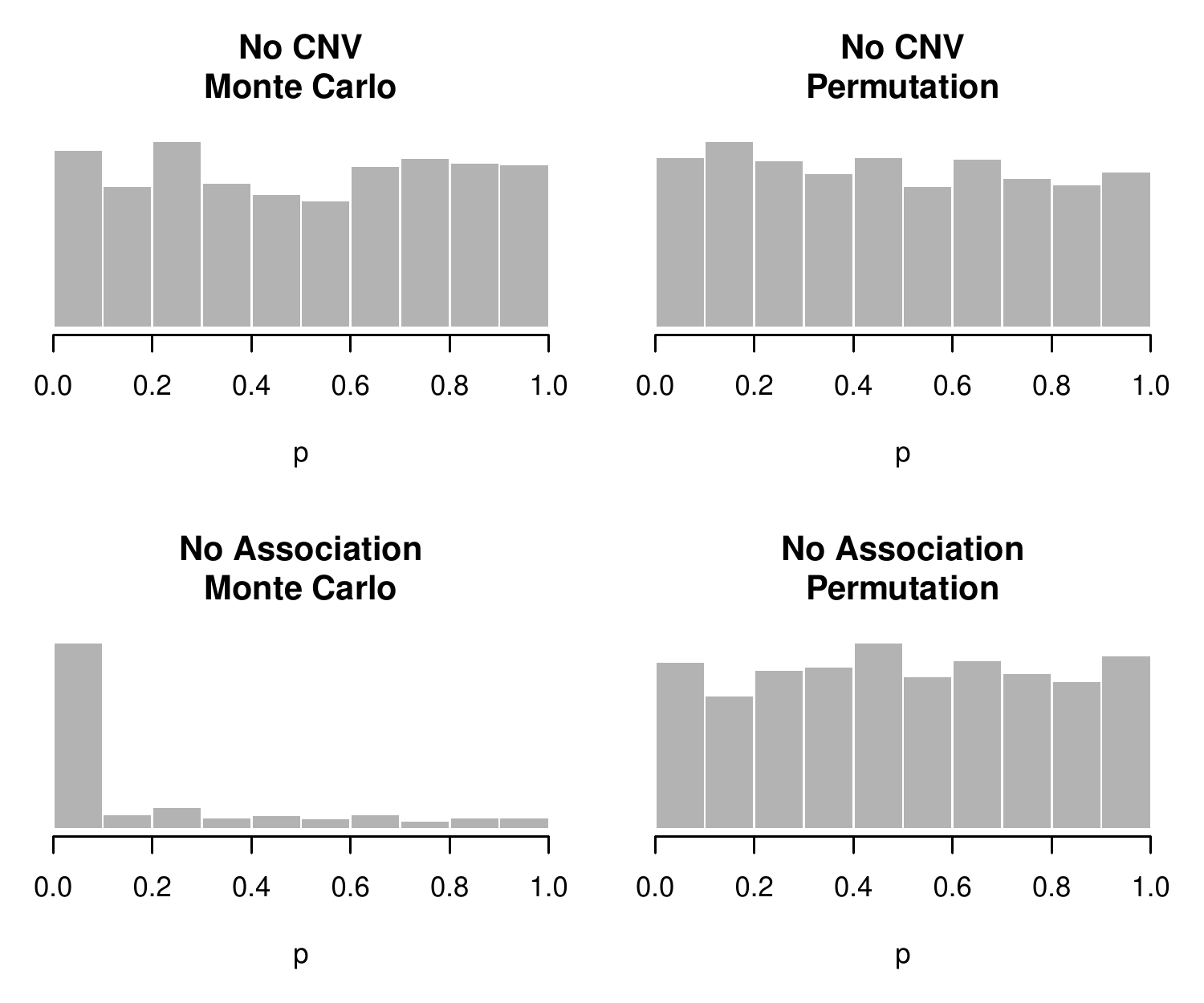}
 \caption{\label{Fig:fwer} Ability (or inability) of Monte Carlo and Permutation approaches to maintain family-wise error rate under the two null scenarios.  The implementation of CBS provided by \texttt{DNAcopy} does not return $p$-values (only whether they fall above or below a cutoff), and thus could not be included in this plot.  Valid $p$-values should appear uniformly distributed under the null; this is clearly not the case in the lower left histogram.}
\end{figure}

We make the following additional observations: (1) The CBS approach is somewhat more robust to the exchangeability issue than the Monte Carlo approach; {\em i.e.}, its type I error rate is not as badly violated.  (2) The data simulated here for the ``no association'' setting are somewhat exaggerated: the CNV was present in 50\% of the population and the signal to noise ratio was about twice as high as that typically observed in real data.  In more realistic settings, the violation of type I error rate is not nearly as severe.  The results in Table~\ref{Tab:fwer} and Figure~\ref{Fig:fwer} are intended to be an illustrative counterexample to demonstrate that CBS and kernel Monte Carlo are not guaranteed to preserve the type I error in all settings.  (3) Circular binary segmentation was developed for the purpose of detecting CNVs, not aggregating marker-level tests, and its failure to preserve the family-wise error rate in this setting is in no way a criticism of CBS in general.

\subsection{Permutation approach}
\label{Sec:perm}

We now formally define the kernel permutation method introduced in Section~\ref{Sec:exch} and show that it preserves the family-wise error rate for the problem of CNV association testing.  For a given set of test results $\{p_j\}$, consider quantifying whether or not the data represent compelling evidence for a CNV-phenotype association using the statistic
\al{eq:tstat}{T_{\max}=\max_j\{T_j\}.}
If the tests are directional, with results $\{p_j,s_j\}$, we use $T_{\max}=\max_j\{\abs{T_j}\}$.
% One-sided tests?

To obtain the null distribution of $T_{\max}$, we use a permutation approach, generating up to $n!$ unique draws $\{T_{\max}^{(b)}\}_{b=1}^B$ from the permutation distribution of $T_{\max}$.  The procedure is as follows.  At any given iteration, draw a random vector of phenotypes $\y^{(b)}$ by permuting the original vector of phenotypes.  Next, carry out marker level tests of association between the original CNV intensities and the permuted vector of phenotypes, obtaining a vector of permutation test results $\{p_j^{(b)}\}$.  Finally, apply the kernel aggregation procedure described in Section~\ref{Sec:kernel} to obtain $\{T_j^{(b)}\}$ and $T_{\max}^{(b)}$.

We may then use the empirical CDF of these draws from the permutation distribution of $T_{\max}$ to obtain the estimate $\hat{F}_0$.  Thus, we obtain a global test for the presence of a CNV-phenotype association based on $p=1-\hat{F}_0(T_{\max})$.  By preserving the correlation structure of the original CNV intensities, this approach does not rely on any assumptions of exchangeability or independence across neighboring markers, and is thereby able to preserve the type I error rate of the testing procedure, unlike the other approaches described in Section~\ref{Sec:exch}.  We now formally present this result, the proof of which appears in the Appendix.

\begin{thm}
\label{Thm:fwer}
Let $H_0$ denote the hypothesis that the phenotype, $y_i$, is independent of the vector of CNV intensities, $\x_i$.  Then, using the permutation approach described above with any of the kernel aggregation approaches in Section~\ref{Sec:kernel}, for any $\alpha \in (0,1)$,
\al{eq:thm}{\Pr(\textrm{Type I error}) \leq \alpha.}
\end{thm}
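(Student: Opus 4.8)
The plan is to reduce the claim to the standard exactness property of permutation tests. The key observation is that under $H_0$, the phenotype vector $\y = (y_1,\dots,y_n)$ is exchangeable with respect to the CNV intensity matrix $\X = (\x_1,\dots,\x_n)$: since each $y_i$ is independent of $\x_i$ and the subjects are i.i.d., the joint distribution of $(\y,\X)$ is invariant under any permutation $\sigma$ of the subject labels applied to $\y$ alone, i.e. $(\y_\sigma,\X) \overset{d}{=} (\y,\X)$. I would state this invariance carefully as the first lemma-style step, since it is the only place the hypothesis $H_0$ enters.

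Next I would observe that the entire pipeline producing $T_{\max}$ — running the $J$ marker-level association tests to get $\{p_j\}$ (and signs $\{s_j\}$ if directional), applying a fixed transformation $f$ from Section~\ref{Sec:trans}, applying a fixed kernel $K_h$ from Section~\ref{Sec:kernel} to form the aggregates $\{T_j\}$ via \eqref{eq:kernel}, and taking the max (or max absolute value) — is a fixed measurable function $g(\y,\X)$ of the data that does \emph{not} depend on the subject ordering except through $\y$ paired with $\X$. Writing $T_{\max} = g(\y,\X)$ and $T_{\max}^{(b)} = g(\y_{\sigma_b},\X)$ for the permuted phenotype vectors, the invariance from the first step gives that $(g(\y_{\sigma},\X))_{\sigma \in S_n}$ is an exchangeable collection, and in particular $T_{\max}$ has the same distribution as a uniformly-random element of the permutation orbit. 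This is the crux: because we permute $\y$ rather than the $\{p_j\}$, the correlation structure among markers induced by a present-but-unassociated CNV is carried along intact in $\X$, so no exchangeability-across-markers assumption is needed — contrasting with the CBS and Monte Carlo approaches discussed in Section~\ref{Sec:exch}.

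From here the argument is the textbook permutation-test calculation. Conditioning on the unordered data, $T_{\max}$ and the reference values $\{T_{\max}^{(b)}\}$ are exchangeable, so the rank of $T_{\max}$ among the $B$ draws (with ties broken conservatively, counting $T_{\max}^{(b)} \geq T_{\max}$) is stochastically larger than uniform, whence the $p$-value $p = 1 - \hat{F}_0(T_{\max})$ satisfies $\Pr(p \leq \alpha \mid H_0) \leq \alpha$ for every $\alpha \in (0,1)$; averaging over the conditioning distribution preserves the bound, giving \eqref{eq:thm}. I would note that when $B < n!$ the $\sigma_b$ are drawn uniformly (the identity may or may not be included), which only affects the tie-handling constant and not the inequality. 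The ``Type I error'' here is simultaneously the per-region error and the family-wise error for that region, since $T_{\max}$ already maxes over all candidate locations $\ell_j$ — so controlling it controls the FWER over the multiplicity of aggregates, which is the point of the theorem.

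The main obstacle, and the step deserving the most care, is the first one: verifying the exchangeability invariance in full generality, i.e. that it holds regardless of which specific marker-level test is used, which transformation $f$ is chosen, and which kernel/bandwidth convention (flat vs.\ Epanechnikov, constant-width vs.\ constant-marker) is in force. The resolution is that all of these are deterministic, ordering-free post-processing maps applied \emph{after} the data enter, so they commute with relabeling subjects; once phrased as ``$T_{\max}$ is a function of $(\y,\X)$ that treats subjects symmetrically,'' the rest is automatic. A secondary subtlety is ties and the discreteness of $\hat{F}_0$ from finite $B$, handled by the conservative $\geq$ convention noted above.
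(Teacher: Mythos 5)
Your proposal is correct and follows essentially the same route as the paper: both rest on the observation that under $H_0$ the joint law of $(\X,\y)$ is invariant under permutations of $\y$ (using the i.i.d.\ subject structure and within-subject independence), that $T_{\max}$ is a deterministic function of $(\X,\y)$, and then the standard permutation-test calculation showing at most a fraction $\alpha$ of the orbit can exceed the $1-\alpha$ quantile of the permutation distribution. The only cosmetic difference is that you condition on the permutation orbit and argue via the rank of $T_{\max}$, while the paper averages the test function over all permutations inside an unconditional expectation; these are the same argument.
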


It is worth pointing out that the above theorem is proven for the case in which all permutations of $\{y_i\}$ are considered.  In practice, as it is usually impractical to consider all permutations, only a random subset of these permutations are considered.  However, by the law of large numbers, the above conclusion still holds approximately, and may be made as precise as necessary by increasing the value of $B$, the number of permutations evaluated.  For the numerical results in Section~\ref{Sec:sim},
% and \ref{Sec:real-data},
we use $B=1,000$. 

The global test above is of limited practical benefit in the sense that it does not indicate the location of the associated CNV.  Thus, we also consider the following equivalent marker-level test: declare significant evidence for the presence of a CNV-phenotype association at any marker for which $T_j > F_0^{-1}(1-\alpha)$.  Below, we state the corollary to Theorem~\ref{Thm:fwer} for the kernel permutation method, viewed as a multiple testing procedure for each marker.

\begin{crly}
\label{Crly:fwer}
Let $H_{0j}$ denote the hypothesis that the phenotype, $y_i$, is independent of the CNV intensity at marker $j$, $X_{ij}$.  Then, under the global null hypothesis that $y_i$ is jointly independent of $\{X_{ij}\}$, for any $\alpha \in (0,1)$,
\al{eq:crly}{\Pr(\textrm{At least one Type I error}) \leq \alpha}
using the permutation approach described above and $T_j > F_0^{-1}(1-\alpha)$ as the test function for $H_{0j}$.  In other words, the testing procedure described above controls the FWER in the weak sense at level $\alpha$.
\end{crly}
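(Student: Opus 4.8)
The plan is to obtain Corollary~\ref{Crly:fwer} as an essentially immediate consequence of Theorem~\ref{Thm:fwer}, the only real content being to recognize that the family-wise error event of the marker-level procedure coincides with the rejection event of the global test.

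First I would rewrite the error event. Under the global null every $H_{0j}$ is true, so a Type I error occurs at marker $j$ exactly when the test function rejects, i.e.\ when $T_j > F_0^{-1}(1-\alpha)$ (replacing $T_j$ by $\abs{T_j}$ in the directional case). Hence
\begin{align*}
\{\textrm{at least one Type I error}\}
&= \bigcup_j \{T_j > F_0^{-1}(1-\alpha)\} \\
&= \Big\{\textstyle\max_j T_j > F_0^{-1}(1-\alpha)\Big\}
= \{T_{\max} > F_0^{-1}(1-\alpha)\},
\end{align*}
so that $\Pr(\textrm{at least one Type I error}) = \Pr_{H_0}\!\big(T_{\max} > F_0^{-1}(1-\alpha)\big)$.

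Second, I would bound this probability by the same exchangeability argument that underlies Theorem~\ref{Thm:fwer}, specialized to $T_{\max}$. The global null of the corollary --- joint independence of $y_i$ and $\{X_{ij}\}$ --- is precisely the hypothesis $H_0$ of the theorem, and ``weak'' FWER control means control under exactly this hypothesis. Conditioning on the CNV intensities and on the multiset of phenotype values, $H_0$ makes the phenotype-to-subject assignment a uniform random permutation, so $T_{\max}$ is a uniform draw from its $n!$ permutation replicates and $F_0$ is its exact (conditional, hence marginal) null CDF. With $F_0^{-1}$ read as the generalized (right-continuous) quantile inverse, the elementary quantile inequality gives
\begin{align*}
\Pr_{H_0}\!\big(T_{\max} > F_0^{-1}(1-\alpha)\big) = 1 - F_0\!\big(F_0^{-1}(1-\alpha)\big) \le \alpha,
\end{align*}
which together with the set identity above is \eqref{eq:crly}. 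Equivalently, one may simply note that $\{T_{\max} > F_0^{-1}(1-\alpha)\}$ lies inside the rejection region of the global test of Theorem~\ref{Thm:fwer} and invoke \eqref{eq:thm} directly.

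I do not anticipate a substantive obstacle; the one point needing care is the discreteness of the permutation distribution. Because $F_0$ has at most $n!$ atoms, $F_0^{-1}(1-\alpha)$ must be the quantile inverse rather than an exact inverse, and it is essential that the marker-level cutoff uses the \emph{strict} inequality $T_j > F_0^{-1}(1-\alpha)$: this is exactly what forces $1 - F_0(F_0^{-1}(1-\alpha)) \le \alpha$, and it is the same device already relied on in the proof of Theorem~\ref{Thm:fwer}. If $F_0$ places an atom at its $(1-\alpha)$-quantile the bound is strict, so the procedure is then slightly conservative.
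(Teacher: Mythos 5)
Your proposal is correct and matches the paper's (implicit) argument: the paper offers no separate proof of the corollary precisely because, under the global null, the union of marker-level rejection events is the event $\{T_{\max} > F_0^{-1}(1-\alpha)\}$, which is the rejection event of the global test bounded by Theorem~\ref{Thm:fwer}. Your additional remarks on the generalized inverse and the conservativeness at atoms of the discrete permutation distribution are accurate refinements of the same route, not a different one.
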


It is worth noting that the procedure controls the FWER only in the weak sense --- in other words, that it limits the probability of a false declaration of a CNV only under the global null hypothesis that there are no CNVs associated with the outcome.  Typically in multiple testing scenarios, strong control is desirable.  However, in the case of CNV-phenotype association, strong control is impractical, as it would imply that a method not only identifies CNV-phenotype associations, but can perfectly detect the genomic boundary of any associated CNV.  This is an unrealistic requirement; in practice, there is no way to prevent the possibility that a detected CNV-phenotype association may spill over beyond the boundary of the CNV.

\section{Gemcitabine study}
\label{Sec:real-data}

In this section we describe a pharmacogenomic study of gemcitabine, a commonly used treatment for pancreatic cancer.  We begin by describing the design of the study \citep[this description is similar to that provided in][]{Breheny2012}, then analyze data from the study using the proposed kernel-based aggregation method.  This data will also be used to simulate measurement errors for the simulation studies in Section~\ref{Sec:sim}.

The gemcitabine study was carried out on the Human Variation Panel, a model system consisting of cell lines derived from Caucasian, African-American and Han Chinese-American subjects (Coriell Institute, Camden, NJ).  Gemcitabine cytotoxicity assays were performed at eight drug dosages (1000, 100, 10, 1, 0.1, 0.01, 0.001, and 0.0001 uM) \citep{Li2008c}.  Estimation of the phenotype IC50 (the effective dose that kills 50\% of the cells) was then completed using a four parameter logistic model \citep{Davidian1995}.  Marker intensity data for the cell lines was collected using the Illumina HumanHap 550K and HumanHap510S at the Genotyping Shared Resources at the Mayo Clinic in Rochester, MN, which consists of a total of 1,055,048 markers \citep{Li2009a,Niu2010}.  Raw data was normalized according to the procedure outlined in \citet{Barnes2008}.

172 cell lines (60 Caucasian, 53 African-American, 59 Han Chinese-American) had both gemcitabine cytotoxicity measurements and genome-wide marker intensity data. To illustrate the application of the kernel-based aggregation approach, we selected one chromosome (chromosome 3) from the genome-wide data.  To control for the possibility of population stratification, which can lead to spurious associations, we used the method developed by \citet{Price2006}, which uses a principal components analysis (PCA) to adjust for stratification.  At each marker, a linear regression model was fit with PCA-adjusted IC50 as the outcome and intensity at that marker as the explanatory variable; these models produce the marker-level tests.

\begin{figure}[htb!]
 \centering
 \includegraphics[width=0.6\linewidth]{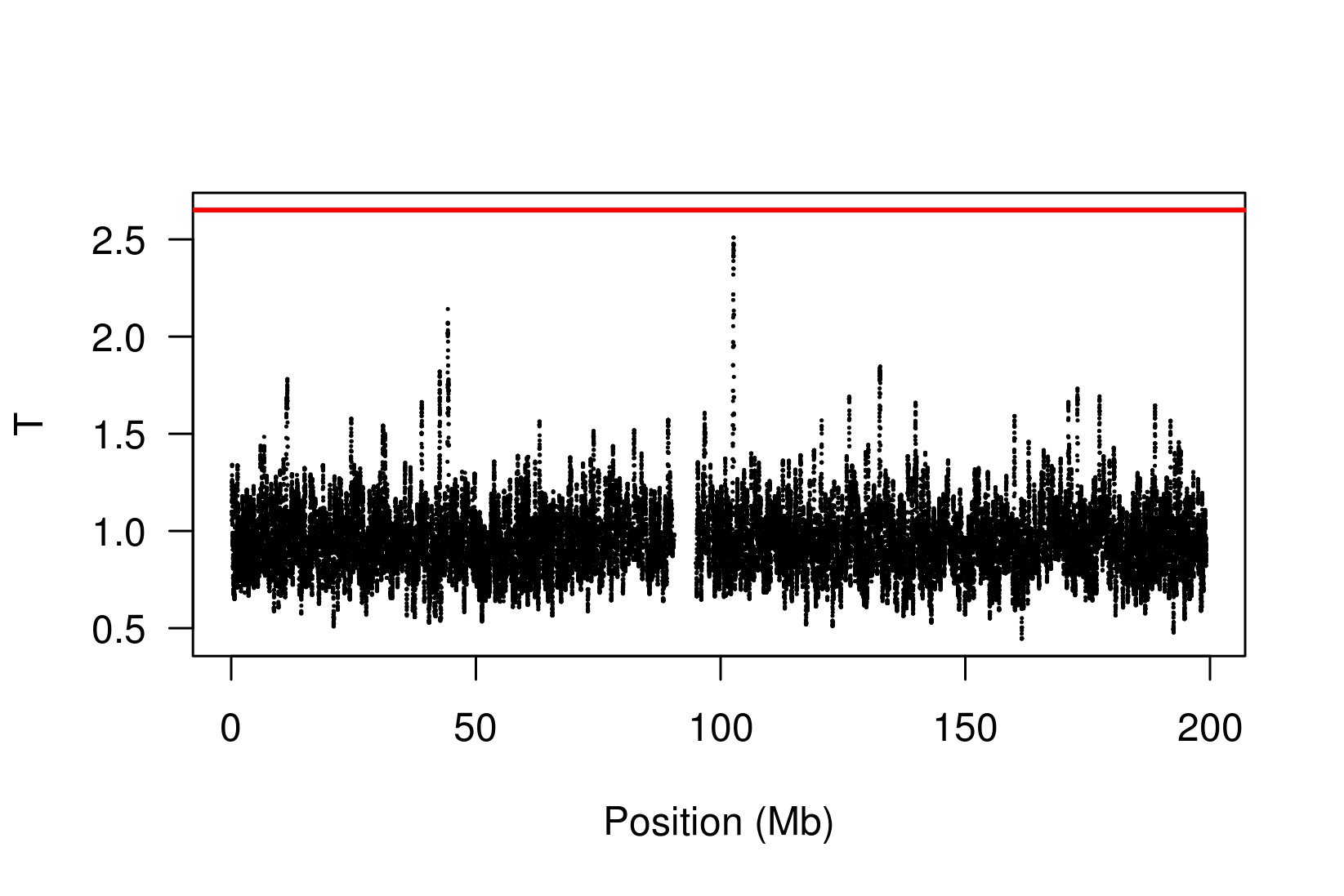}
 \caption{\label{Fig:gemc} Analysis of the gemcitabine data (Chromosome 3) using the proposed kernel aggregation method.  The kernel aggregations $T_j$ are plotted against chromosomal position.  The red line indicates the cutoff for chromosome-wide FWER significance at the $\alpha=.1$ level.}
\end{figure}

We analyzed these data using the kernel-based approach described in Section~\ref{Sec:method} with a bandwidth of 50 markers and the log transformation.  The results are shown in Figure~\ref{Fig:gemc}.  Note the presence of a peak at 102.6 Mb; this genomic region was also illustrated in Figure~\ref{Fig:mlt}.  The red line indicates the FWER-controlled, chromosome-wide significance threshold at the $\alpha=0.1$ level.  As the figure indicates, there is insufficient evidence in this study to establish a CNV association involving response to gemcitabine ($p=0.16$) after controlling the chromosome-wide FWER.

Copy number variation in the region of chromosome 3 at 102.6 Mb, which is in close proximity to the gene ZPLD1, has been found by \citet{Glessner2010} to be associated with childhood obesity.  An earlier analysis of this data by \citet{Breheny2012} indicated suggestive evidence that this region harbors a CNV association with gemcitabine response, but lacked a formal way to control the error rate at the chromosome-wide level.  This example illustrates the need for the more rigorous approach we develop here.  The lack of significance in this example is perhaps not surprising, in that 172 subjects is a relatively small sample size for a CNV association study.

\section{Simulations}
\label{Sec:sim}

\subsection{Design of spike-in simulations}

In this section, we study the ability of the proposed approach to detect CNV-phenotype associations using simulated CNVs and corresponding intensity measurements.  The validity of our conclusions depends on how realistic the simulated data is, so we have given careful thought to simulating this data in as realistic a manner as possible.  The spike-in design that we describe here is also described in \citet{Breheny2012}.

The basic design of our simulations is to use real data from the gemcitabine study described in Section~\ref{Sec:real-data}, ``spike'' a signal into it, then observe the frequency with which we can recover that signal.  We used circular binary segmentation \citep{Olshen2004,Venkatraman2007} to estimate each sample's underlying mean intensity at every position along the chromosome, then subtracted the actual intensity measurement from the estimated mean to obtain a matrix of residuals representing measurement error.  This matrix, denoted $\bR$, has 172 rows (one for each cell line) and 70,542 columns (one for each marker).

We then used these residuals to simulate noise over short genomic regions in which a single simulated CNV is either present or absent.  Letting $i$ denote subjects and $j$ denote markers, the following variables are generated: $z_i$, an indicator for the presence or absence of a CNV in individual $i$; $x_{ij}$, the intensity measurement at marker $j$ for individual $i$; and $y_i$, the phenotype.  For the sake of clarity, we focus here on a random sampling design in which the outcome is continuous; similar results were obtained from a case-control sampling design in which the outcome is binary.  In the random sampling design, the CNV indicator, $z_i$, is generated from a Bernoulli distribution, where $\gamma=\Pr(z_i=1)$ is the frequency of the CNV in the population; subsequently, $y_i|z_i$ is generated from a normal distribution whose mean depends on $z_i$.

\begin{figure}[b!]
 \centering
 \includegraphics[width=\linewidth]{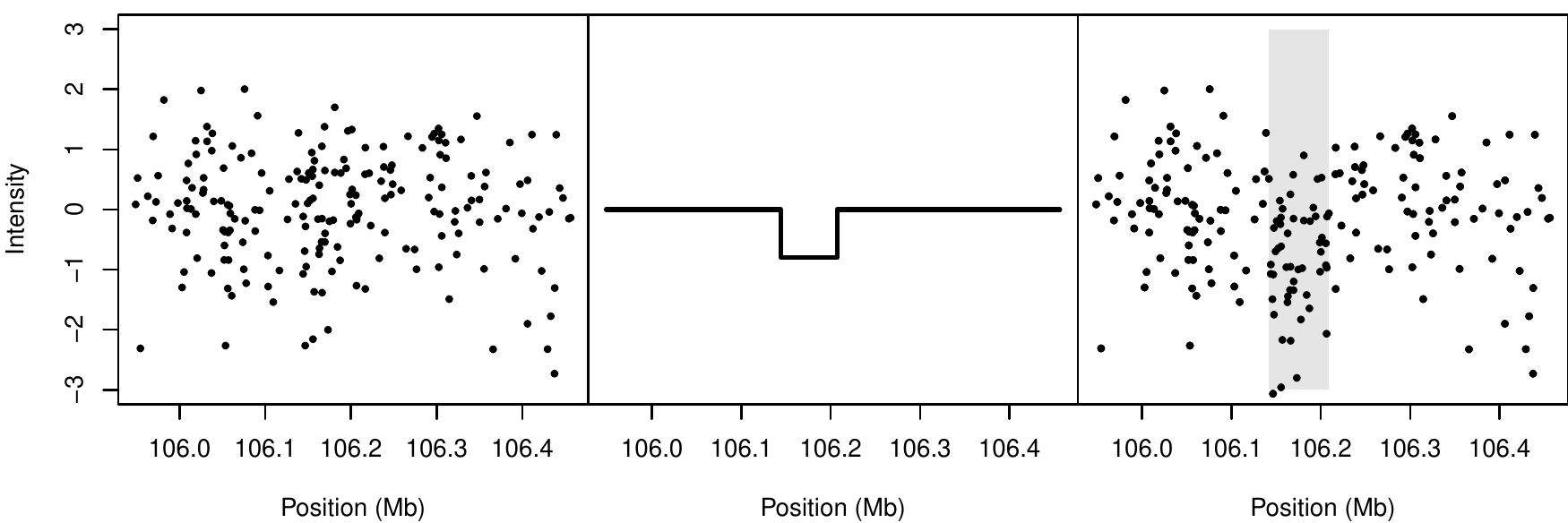}
 \caption{\label{Fig:sim} Illustration of spike-in simulation design.  {\em Left:} The noise, randomly drawn from among the estimated measurement errors for a single cell line.  {\em Middle:} The spiked-in signal.  {\em Right:} The resulting simulated data; the gray shaded region denotes the boundary of the spiked-in CNV.}
\end{figure}

At the beginning of each simulated data set, 200 markers were randomly selected from the columns of $\bR$.  The measurement error for simulated subject $i$ was then drawn from the observed measurement errors at those markers for a randomly chosen row of $\bR$.  Thus, within a simulated data set, all subjects are studied with respect to the same genetic markers, but the markers vary from data set to data set.  Simulating the data in this way preserves all the features of outliers, heavy-tailed distributions, skewness, unequal variability among markers, and unequal variability among subjects that are present in real data.

The intensity measurements $\{x_{ij}\}$ derive from these randomly sampled residuals.  To the noise, we add a signal that depends on the presence of the simulated CNV, $z_i$.  The added signal is equal to zero unless the simulated genome contains a CNV encompassing the $j$th marker; otherwise the added signal is equal to the standard deviation of the measurement error times the signal to noise ratio.  Our simulations employed a signal-to-noise ratio of 0.8, which corresponded roughly to a medium-sized detectable signal based on our inspection of the gemcitabine data.  Note that the phenotype and intensity measurement are conditionally independent given the latent copy-number status $z_i$.  An illustration of the spike-in process is given in Figure~\ref{Fig:sim}.

For the Illumina Human1M-Duo BeadChip, which has a median spacing of 1.5 kb between markers, 200 markers corresponds to simulating a 300 kb genomic region.  We varied the length of the CNV from 10 to 50 markers, corresponding to a size range of 15 to 75 kb.  For the simulations presented in the remainder of the article, we used a sample size of $n=1,000$ and an effect size (change in mean divided by standard deviation) for the continuous outcome of 0.4.

\subsection{Transformations}
\label{Sec:trans-results}

We begin by examining the impact on power of the various transformations proposed in Sections~\ref{Sec:trans} and \ref{Sec:sign}.  In order to isolate the effect of transformation, we focus here on the ``optimal bandwidth'' results: the bandwidth of the kernel was chosen to match the number of markers in the underlying CNV.  This will lead to the maximum power to detect a CNV-phenotype association, although this approach is clearly not feasible in practice, as the size of an underlying CNV is unknown.

\begin{figure}[htb!]
 \centering
 \includegraphics[width=\linewidth]{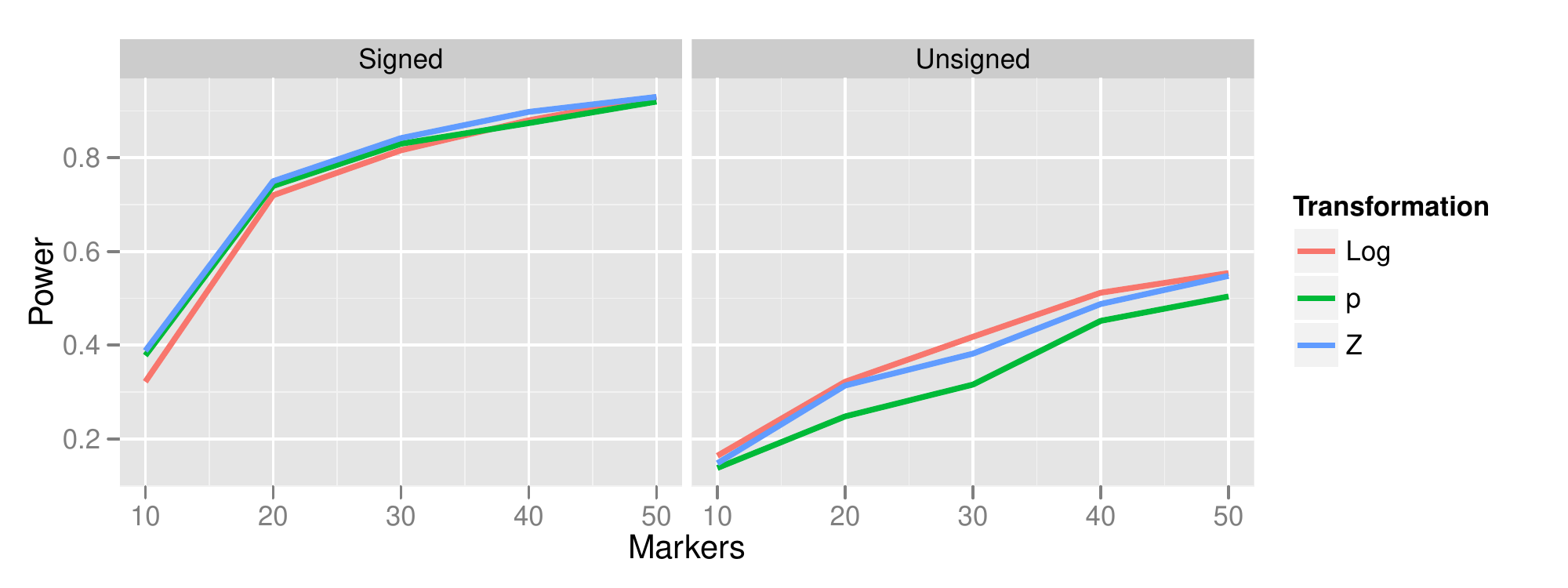}
 \caption{\label{Fig:trans} Effect of transformation choice on power.  Population CNV frequency was set to 10\%; optimal bandwidths used.}
\end{figure}

The relationship between power and transformation choice is illustrated in Figure~\ref{Fig:trans}.  The figure illustrates a basic trend that held consistently over many CNV frequencies and bandwidth choices: although the various transformations do not dramatically alter power, the normalizing transformation (Z) is most powerful for signed test results, while the log transformation is most powerful for unsigned test results.  In the results that follow, unless otherwise specified, we employ the normalizing transformation for signed test results and the log transformation for unsigned tests.  The substantial gain in power attained by incorporating the direction of association is also apparent from Figure~\ref{Fig:trans} by comparing the left and right halves of the figure.  

\subsection{Choice of kernel}

In this section, we examine two aspects of kernel choice: bandwidth implementation (constant-width vs. constant-marker) and kernel shape (flat vs. Epanechnikov), defined in Section~\ref{Sec:kernel}.  When all markers are equally spaced, the constant-width and constant-marker kernels are equivalent.  To examine the impact on power when markers are unequally spaced, we selected at random a 200-marker sequence from chromosome 3 of the Illumina HumanHap 550K genotyping chip and spiked in CNVs of various sizes.  The optimal bandwidth (either in terms of the number of markers or base pairs spanned by the underlying CNV) was chosen for each method.

The left side of Figure~\ref{Fig:kernel} presents the results of this simulation.  The constant-marker approach is substantially more powerful.  When the number of markers is not held constant, the aggregation measure $T_j$ is more highly variable for some values of $j$ than others.  This causes the null distribution of $T_{\max}$ to have thicker tails, which in turn increases the $p$-value for the observed $T_{\max}$, thus lowering power.  This phenomenon manifests itself most dramatically for small bandwidths.  Consequently, throughout the rest of this article, we employ constant-marker kernels for all analyses.

\begin{figure}[htb!]
 \centering
 \includegraphics[width=\linewidth]{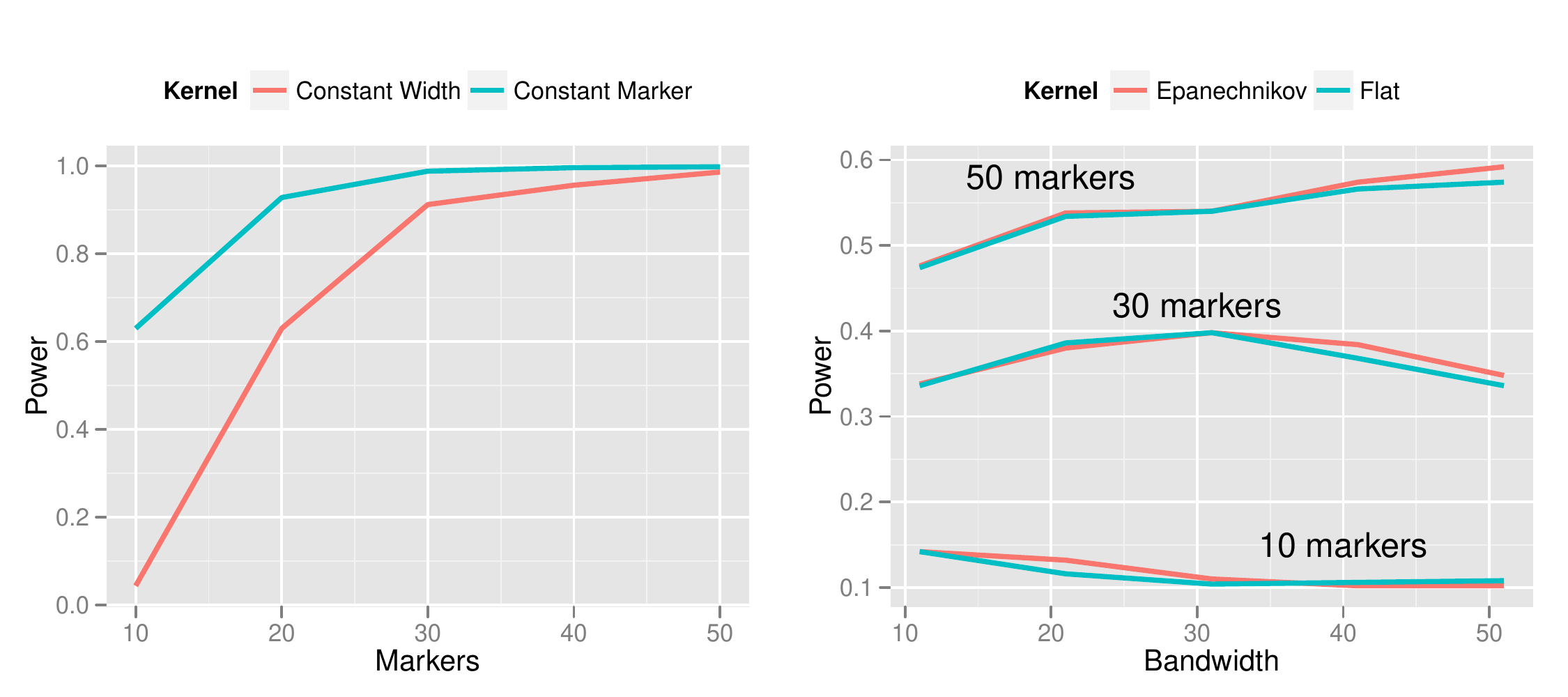}
 \caption{\label{Fig:kernel} Effect of kernel choice on power.  {\em Left:} Constant-width kernel vs. constant-marker kernel.  {\em Right:} Flat vs. Epanechnikov kernel.  In both plots, population CNV frequency was 10\%, test results were unsigned, and the log transformation was used.}
\end{figure}

The right side of Figure~\ref{Fig:kernel} presents the results of changing the kernel shape from the flat kernel described in \eqref{eq:kern-flat} to the Epanechnikov kernel described in \eqref{eq:kern-ep}.  We make several observations: (1) The shape of the kernel has little impact on power; the two lines are nearly superimposed.  (2) The kernel approach is relatively robust to choice of bandwidth; even 5-fold differences between the bandwidth and optimal bandwidth do not dramatically decrease power.  (3) Nevertheless, the optimal bandwidth does indeed occur when the number of markers included in the kernel matches the true number of markers spanned by the CNV.  (4) The Epanechnikov kernel is slightly more robust to choosing a bandwidth that is too large than the flat kernel is.  This makes sense, as the Epanechnikov kernel gives less weight to the periphery of the kernel.

%\subsection{Signed vs. unsigned aggregation}
%\label{Sec:sign-results}

%Figure~\ref{Fig:signed} illustrates the considerable impact upon power of incorporating sign information.  The substantial increase in power reinforces the common-sense intuition that when directional information about the test results is available, it is advantageous to take this information into account.  Figure~\ref{Fig:signed} presents results for the log transformation; the normalizing transformations of \eqref{eq:trans-uz} and \eqref{eq:trans-sz} --- which have higher power for the signed results and lower power for the unsigned results --- would show an even larger gain.

\subsection{Kernel-based aggregation vs. variant-level testing}

Lastly, we compare the kernel-based aggregation approach with variant-level testing.  To implement variant-level testing, each sample was assigned a group (``variant present'' or ``variant absent'') on the basis of whether a CNV was detected by CBS.  A two-sample $t$-test was then carried out to test for association of the CNV with the phenotype.  This variant-level approach was compared with kernel-based aggregation of marker-level testing for a variety of bandwidths.  The results are presented in Figure~\ref{Fig:vlt-mlt}.

\begin{figure}[htb!]
 \centering
 \includegraphics[width=\linewidth]{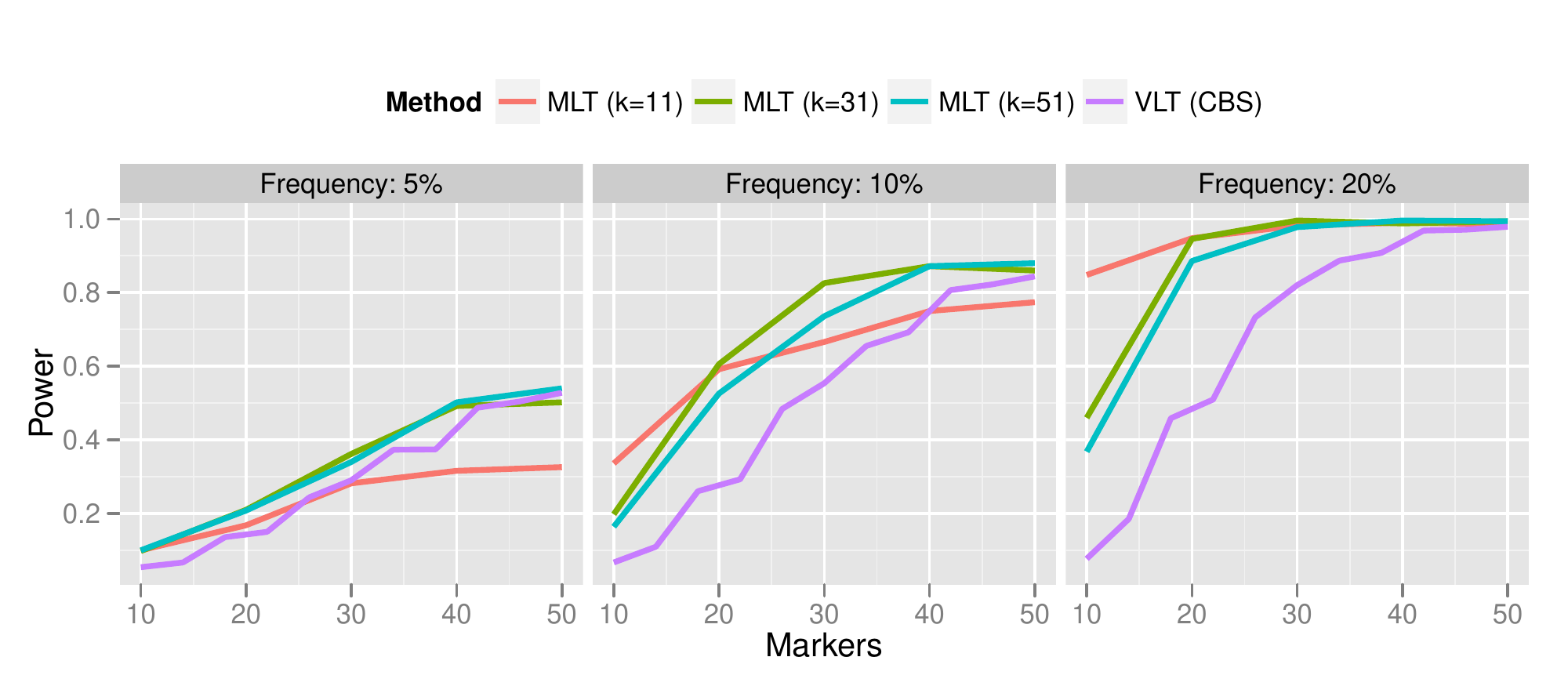}
 \caption{\label{Fig:vlt-mlt} Power comparison of variant-level testing (using CBS for CNV calling) with marker-level testing (using kernel-based aggregation).}
\end{figure}

For rare CNVs (5\% population frequency), the power of the variant-level approach and the aggregated marker-level approach are comparable.  However, for more common CNVs, the marker-level approach offers a substantial increase in power.  For the most part, this increase in power persists even when the bandwidth is misspecified.  Only when the bandwidth was much too small (selecting a 10-marker bandwidth for a 50-marker CNV) did the variant-level approach surpass marker-level aggregation.

Generally speaking, these results are consistent with the findings reported in \citet{Breheny2012}, who found that variant-level tests have optimal power relative to marker-level tests when CNVs are large and rare; conversely, marker-level tests have optimal power relative to variant-level tests when CNVs are small and common.  This is understandable given the limited accuracy of calling algorithms for small CNVs.

Comparing the results in Figure~\ref{Fig:vlt-mlt} with the results of \citet{Breheny2012}, who aggregated marker-level tests by applying CBS to the $p$-values as described in Section~\ref{Sec:exch}, we find that the kernel approach is a substantially more powerful method for aggregating marker-level tests than a change-point approach.  Specifically, Breheny et al. found that the change-point approach had very low power at 5\% frequency -- much lower than the variant-level approach.  On the other hand, in the same setting we find that the kernel approach is comparable to, and even slightly more powerful than, the variant-level approach.  Furthermore, as discussed in Section~\ref{Sec:exch}, a change-point analysis of marker-level tests also relies on exchangeability, which does not always hold.  Thus, the methods developed in this article are both more powerful and achieve better control over the FWER than the change-point analysis described in \citet{Breheny2012}.

A potential drawback of the kernel approach is the need to specify a bandwidth.  This makes the robustness of the method to bandwidth misspecification, as illustrated in Figure~\ref{Fig:vlt-mlt}, particularly important because in practice it is difficult to correctly specify the bandwidth {\em a priori}.  Indeed, it is possible that multiple CNVs associated with the outcome are present on the same chromosome and have different lengths.  A method that is not robust to bandwidth we be incapable of detecting both CNVs.  Generally speaking, a bandwidth of roughly 30 markers seems to provide good power over the range of CNV sizes that we investigate here.

\section{Discussion}

One drawback of the approach is the need to carry out permutation testing.  The kernel aggregation itself is very fast, but the need to carry out $\approx 1,000$ permutation tests for each marker may be highly computationally intensive (although easy to parallelize), depending the complexity of the marker-level test.  Ongoing research in our group is focusing on ways to speed up the approach described here with a model-based formulation that avoids the need for permutation testing.

The simulation studies of Section~\ref{Sec:sim} address a limited-scale version of a larger question: how do marker-level test aggregation and variant-level testing compare for chromosome-wide and genome-wide analysis?  This is an important question and deserves further study.  In general, multiplicity is a thorny issue for CNV analyses, as the true location of CNVs are unknown and can overlap in a number of complicated ways.  The issue of how many tests to carry out and adjust for is a challenging question for variant-level testing and a considerable practical difficulty in analysis.  In contrast, aggregation of marker-level results avoids this issue altogether.  We have shown that the proposed approach is both powerful at detecting CNV associations and rigorously controls the FWER at a genome-wide level --- two appealing properties.  However, future work analyzing additional studies using kernel aggregation and studying its properties in larger, more complex settings is necessary.

%An \texttt{R} package, \texttt{kbag}, containing an implementation of all the methods described in this article, will be made publicly available on CRAN (\url{cran.r-project.org}).

\subsection*{Acknowledgments}

We thank Brooke Fridley and Liang Li at the Mayo Clinic for contribution of the gemcitabine pharmacogenomic study data for this research.

\subsection*{Appendix}

\begin{proof}[Proof of Theorem~\ref{Thm:fwer}]
Let $\mathcal{P}$ denote the set of all possible permutations of $\{y_i\}$, $F_0$ the CDF of $T_{\max}$ over $\mathcal{P}$, and $F_0^{-1}$ its generalized inverse.  Also, let $\phi(\X,\y)=1$ if $T_{\max}(\X,\y) > F_0^{-1}(1-\alpha)$ and 0 otherwise.

Now, note that under the null hypothesis that $\x_i$ and $y_i$ are independent,
\as{P(\X,\y) &= \prod_i P(\x_i, y_i)\\
 &= \prod_i P(\x_i)P(y_i)\\
 &= P(\X,\y^*)}
for all $y^* \in \mathcal{P}$.  Thus, $\Ex_0 \phi(\X,\y^*)$ is a constant for all $\y^*$ and
\as{\Ex_0\left\{\phi(\X,\y)\right\} &= \frac{1}{n!} \sum_{\y^* \in \mathcal{P}} \Ex_0 \phi(\X,\y^*)\\
 &= \Ex_0 \frac{1}{n!} \sum_{\y^* \in \mathcal{P}} \phi(\X,\y^*)\\
 &\leq \alpha,}
where the term inside the expectation in the second line is less than or equal to $\alpha$ for all $\X$ and $\y$ by the construction of the test.
\end{proof}

\bibliographystyle{ims-nourl}

\end{document}